\theoremstyle{plain}
\newtheorem{theorem}{Theorem}[section]
\newtheorem{proposition}[theorem]{Proposition}
\newtheorem{lemma}[theorem]{Lemma}
\newtheorem{corollary}[theorem]{Corollary}
\newtheorem{claim}[theorem]{Claim}
\theoremstyle{definition}
\newtheorem{definition}[theorem]{Definition}
\newtheorem{remark}[theorem]{Remark}
\newcommand{\E}{\mathbb{E}}
\newcommand{\F}{\mathbb{F}}
\newcommand{\cC}{\mathcal{C}}
\newcommand{\cL}{\mathcal{L}}
\newcommand{\modstar}[1]{~(\mathop{\rm{mod}^*} #1)}
\DeclareMathOperator{\poly}{poly}
\DeclareMathOperator{\Lift}{Lift}
\DeclareMathOperator{\LiftedRS}{LiftedRS}
\DeclareMathOperator{\RS}{RS}
\DeclareMathOperator{\Tr}{Tr}
\DeclareMathOperator{\Deg}{Deg}
\DeclareMathOperator{\supp}{supp}
\DeclareMathOperator{\Corr}{Corr}
\begin{document}

\title{List-decoding algorithms for lifted codes}
\author{Alan Guo
		\thanks{CSAIL, Massachusetts Institute of
Technology, 32 Vassar Street, Cambridge, MA, USA. {\tt aguo@mit.edu}. Research
supported in part by NSF grants CCF-0829672, CCF-1065125,
and CCF-6922462, and an NSF Graduate Research Fellowship}
		\and
		Swastik Kopparty
		\thanks{Department of Mathematics \& Department of Computer Science, Rutgers University. {\tt swastik.kopparty@rutgers.edu}. Research supported in part by a Sloan Fellowship and NSF CCF-1253886.}
		}

\date{}		

\maketitle

\begin{abstract}
Lifted Reed-Solomon codes are a natural affine-invariant family of
error-correcting codes which generalize Reed-Muller codes. They were known to 
have efficient local-testing and local-decoding algorithms (comparable to the 
known algorithms for Reed-Muller codes), but with significantly better rate.
We give efficient algorithms for list-decoding and local list-decoding of 
lifted codes.
Our algorithms are based on a new technical lemma, which says that codewords of 
lifted codes are low degree polynomials when viewed as univariate polynomials 
over a big field (even though they may be very high degree when viewed as 
multivariate polynomials over a small field).

\end{abstract}

\section{Introduction}
\label{section:introduction}

By virtue of their many powerful applications
in complexity theory, there has been much interest in the study of
error-correcting codes which support ``local"
operations. The operations of interest include
local decoding, local testing, local correcting, and
local list-decoding. Error correcting codes equipped
with such local algorithms have been useful, for example,
in proof-checking, private information retrieval,
and hardness amplification.

The canonical example of a code which supports all the above
local operations is the Reed-Muller code, which is a code based
on evaluations of low-degree polynomials. Reed-Muller codes
have nontrivial local algorithms across a wide range of parameters.
In this paper, we will be interested in the constant rate regime.
For a long time, Reed-Muller codes were the only known codes in this regime
supporting nontrivial locality. Concretely, for every constant integer $m$ and every
constant $R < \frac{1}{m!}$, there are Reed-Muller codes of arbitrarily large
length $n$, rate $R$, constant relative distance $\delta$, which are locally
decodable/testable/correctable from $(\frac12 - \epsilon) \cdot \delta$ fraction
fraction errors using $O(n^{1/m})$ queries. In particular, no nontrivial locality
was known for Reed-Muller codes (or any other codes, until recently) with rate
$R > 1/2$.

In the last few years, new families of codes were found which had interesting 
local algorithms in the high rate regime (i.e., with rate $R$ near $1$).
These codes include multiplicity codes~\cite{KSY, K}, lifted codes~\cite{GKS, Guo},
expander codes~\cite{HOW} and tensor codes~\cite{Viderman10}. Of these, lifted codes are the only ones
that are known to be both locally decodable and locally testable. This paper gives new and improved decoding and testing algorithms for lifted codes.

\subsection{Lifted Codes and our Main Result}

Lifted codes are a natural family of algebraic, affine-invariant codes
which generalize Reed-Muller codes. We give a brief introduction to these codes now\footnote{Technically
we are talking about lifted Reed-Solomon codes, but for brevity we refer to them as lifted codes.}.
Let $q$ be prime power, let $d < q$ and let $m > 1$ be an integer.
Define alphabet $\Sigma = \F_q$. 
We define the lifted code $\mathcal C = \mathcal C(q,d,m)$  to be a subset of 
$\Sigma^{\F_q^m}$, the space of functions from $\F_q^m$ to $\Sigma = \F_q$.
A function $f: \F_q^m \to \F_q$ is in $\mathcal C$ if for every line $L \subseteq \F_q^m$,
the restriction of $f$ to $L$ is a univariate polynomial of degree at most $d$.
Note that if $f$ is the evaluation table of an $m$-variate polynomial of degree $\leq d$, then
$f$ is automatically in $\mathcal C$. The surprising (and useful)
fact is that if $d$ is large and $\F_q$ has small characteristic, then
$\mathcal C$ has significantly more functions, but has the same distance as
the Reed-Muller code. This leads to its improved rate relative
to the corresponding Reed-Muller code, which only contains the evaluation tables of low degree polynomials.

Our main result is an algorithm for list-decoding and local list-decoding
of lifted codes. We show that lifted codes of distance $\delta$ can be
efficiently list-decoded and locally list-decoded (in sublinear-time)
upto their ``Johnson radius" ($1- \sqrt{1-\delta}$).
Combined with the local testability of lifted codes,
this also implies that lifted codes can be locally tested in the
high-error regime, upto the Johnson radius.

It is well known that Reed-Muller codes can be list decoded
and locally list-decoded upto the Johnson radius~\cite{PW, STV}\footnote{To locally list-decode all the way upto the Johnson bound, one actually needs a variant of~\cite{STV} given in~\cite{BK-planes}.}
\footnote{There is another regime, where $q$ is constant, in which the
Reed-Muller codes can be list-decoded beyond the Johnson bound, upto the
minimum distance. See~\cite{GKZ08, Gop10, BL14}}. Our result
shows that a lifted code, which is a natural algebraic supercode of Reed-Muller codes, despite having a vastly greater rate than the corresponding Reed-Muller code, loses absolutely nothing in terms of any (local) algorithmic decoding / testing properties.

In the appendix, we also prove two other results as part of the
basic toolkit for working with lifted codes.
\begin{itemize}
\item Explicit interpolating sets: 
For a lifted code $\mathcal C$, we give a strongly explicit subset $S$ of $\F_q^m$ such that 
for every $g: S \to \F_q$, there is a unique lifted codeword $f: \F_q^m \to \F_q$ from $\mathcal C$ with $f|_{S} = g$. The main interest in explicit interpolating sets for us is that it allows us to convert the {\em sublinear-time} local correction algorithm for lifted codes into a {\em sublinear-time} local decoding algorithm for lifted codes (earlier the known sublinear-time local correction, only implied low-query-complexity local decoding, without any associated sublinear-time local decoding algorithm).

\item Simple local decoding upto half the minimum distance:
We note that there is a simple algorithm for local decoding of lifted codes upto half the minimum distance. This is a direct translation of the elegant weighted-lines local decoding algorithm for matching-vector codes~\cite{BET10} to the Reed-Muller code / lifted codes setting.

\end{itemize}

\subsection{Methods}

We first discuss our (global) list-decoding algorithm, which generalizes the
list-decoding algorithm for Reed-Muller codes due to Pellikaan-Wu~\cite{PW}.
The main technical lemma underlying our algorithm says that codewords of lifted
codes are low-degree when viewed as univariate polynomials. This generalizes
the classical fact due to Kasami-Lin-Peterson~\cite{KLP} underlying the
Pellikaan-Wu decoding algorithm: that multivariate polynomials are low-degree
when viewed as univariate polynomials (``Reed-Muller codes are subcodes of
Reed-Solomon codes'').

The codewords of a lifted code are in general very high degree as $m$-variate polynomials over $\F_q$. There is a description of these codes in terms of spanning monomials~\cite{GKS}, but it is not even clear from this description that lifted codes have good distance. The handle that we get on lifted codes
arises by considering the big field $\F_{q^m}$, and letting
$\phi$ be an $\F_q$-linear isomorphism between $\F_{q^m}$ and $\F_q^m$. 
Given a function $f: \F_q^m \to \F_q$, we can consider the composed function
$f \circ \phi$, and view it as a function
from $\F_{q^m} \to \F_{q}$. Our technical lemma says that this function
$f \circ \phi$ is low-degree as a univariate polynomial over $\F_{q^m}$
(irrespective of the choice of the map $\phi$).

Through this lemma, we reduce the problem of list-decoding 
lifted codes over the small field $\F_q$ to the problem of
list-decoding univariate polynomials (i.e., Reed-Solomon codes)
over the large field $\F_{q^m}$. This latter problem can be
solved using the Guruswami-Sudan algorithm~\cite{GuSu}.

Our local list-decoding algorithm uses the above list-decoding algorithm.
Following~\cite{AroraSudan, STV, BK-planes}, local list-decoding of $m$-variate
Reed-Muller codes over $\F_q$ reduces to (global) list-decoding of 
 $t$-variate Reed-Muller codes over $\F_q$ (for some $t < m$).
For the list-decoding radius to approach the Johnson radius, one needs
$t \geq 2$. This is where the above list-decoding algorithm gets used.

\paragraph{Organization of this paper}

Section~\ref{section:preliminaries} introduces notation and preliminary
definitions and facts to be used in later proofs.
Section~\ref{section:global list decoding} proves our main technical result,
that lifted RS codes over domain $\F_q^m$ are low degree when viewed as
univariate polynomials over $\F_{q^m}$, as well as the consequence for global
list decoding.
Section~\ref{section:local list decoding} presents and analyzes the local list
decoding algorithm for lifted RS codes, along with the consequence for local
testability.
Appendix~\ref{section:interpolating set} describes the explicit interpolating
sets for arbitrary lifted affine-invariant codes.
Appendix~\ref{section:unique decoding} presents and analyzes the local
correction algorithm upto half the minimum distance.

\section{Preliminaries}
\label{section:preliminaries}

\subsection{Notation}

For a positive integer $n$, we use $[n]$ to denote the set $\{1,\ldots, n\}$.
For sets $A$ and $B$, we use $\{A \to B\}$ to denote the set of functions
mapping $A$ to $B$.

For a prime power $q$, $\F_q$ is the finite field of size $q$.
We think of a code $\cC \subseteq \{\F_Q^m \to \F_q\}$ as a
family of functions $f:\F_Q^m \to \F_q$, where $\F_Q$ is an extension field of
$\F_q$, but each codeword is a vector of
evaluations $(f(x))_{x \in \F_Q^m}$ assuming some canonical ordering of elements
in $\F_Q^m$; we abuse notation and say $f \in \cC$ to mean
$(f(x))_{x \in \F_Q^m} \in \cC$.

If $f:\F_q^m \to \F_q$ and line $\ell$ is a line in $\F_q^m$, this formally means
$\ell$ is specified by some $a,b \in \F_q^m$ and the restriction of $f$ to
$\ell$, denoted by $f|_\ell$, means the function $t \mapsto f(a + bt)$.
Similarly, if $P$ is a plane, then it is specified by some $a,b,c \in \F_q^m$
and the restriction of $f$ to $P$, denoted by $f|_P$, means the function
$(t,u) \mapsto f(a + bt + cu)$.

\subsection{Interpolating sets and decoding}

\begin{definition}[Interpolating set]
A set $S \subseteq \F_Q^m$ is an \emph{interpolating set} for $\cC$ if for
every $\widehat{f}: S \to \F_q$ there exists a unique $f \in \cC$ such that
$f|_S = \widehat{f}$.
\end{definition}

Note that if $S$ is an interpolating set for $\cC$, then $|\cC| = q^{|S|}$.

\begin{definition}[Local decoding]
Let $\Sigma$ be an alphabet and let $\cC: \Sigma^k \to \Sigma^n$ be an encoding 
map. A \emph{$(\rho,l)$-local decoding algorithm for $\cC$} is a
randomized algorithm
$D: [k] \to \Sigma$ with oracle access to an input word $r \in \Sigma^n$ and
satisfies the following:
\begin{enumerate}
\item%
If there is a message $m \in \Sigma^k$ such that $\delta(\cC(m), r) \le \rho$,
then for every input $i \in [k]$, we have $\Pr[D^r(i) = m_i] \ge \frac{2}{3}$.

\item%
On every input $i \in [k]$, $D^r(i)$ always makes at most $l$ queries to $r$.

\end{enumerate}
We call $\rho$ the fraction of errors decodable, or the decoding radius, and
we call $l$ the query complexity.
\end{definition}

\begin{definition}[Local correction]
Let $\cC \subseteq \Sigma^n$ be a code. A \emph{$(\rho,l)$-local correction
algorithm for $\cC$} is a randomized algorithm
$C:[n] \to \Sigma$ with oracle access to an input word $r \in \Sigma^n$ and
satisfies the following:
\begin{enumerate}
\item%
If there is a codeword $c \in \cC$ such that $\delta(c,r) \le \rho$,
then for every input $i \in [n]$, we have $\Pr[C^r(i) = c_i] \ge \frac{2}{3}$.

\item%
On every input $i \in [n]$, $C^r(i)$ always makes at most $l$ queries to $r$.

\end{enumerate}
As before, $\rho$ is the decoding radius and $l$ is the query complexity.
\end{definition}

The definition and construction of interpolating sets is motivated by the fact
that if we have an explicit interpolating set for a code $\cC$, then we have
an explicit systematic encoding for $\cC$, which allows us to easily transform
a local correction algorithm into a local decoding algorithm.

\begin{definition}[List decoding]
Let $\cC \subseteq \Sigma^n$ be a code. A \emph{$(\rho, L)$-list decoding algorithm for $\cC$} is an algorithm which takes as input a received word
$r \in \Sigma^n$ that outputs a list $\cL \subseteq \Sigma^n$ of size
$|\cL| \le L$ containing all $c \in \cC$ such that $\delta(c,r) \le \rho$.
The parameter $\rho$ is the \emph{list-decoding radius} and $L$ is the
\emph{list size}.
\end{definition}

\begin{definition}[Local list decoding]
Let $\cC \subseteq \Sigma^n$ be a code. A \emph{$(\rho, L, l)$-local list
decoding algorithm for $\cC$} is a randomized algorithm $A$ with oracle access
to an input word $r \in \Sigma^n$ and outputs a collection of randomized
oracles $A_1,\ldots,A_L$ with oracle access to $r$ satisfying the following:
\begin{enumerate}

\item%
With high probability, it holds that for every $c \in \cC$ such that
$\delta(c,r) \le \rho$, there exists a $j \in [L]$ such that for every
$i \in [n]$, $\Pr[A_j^r(i) = c_i] \ge \frac{2}{3}$.

\item%
$A$ makes at most $l$ queries to $r$, and on any input $i \in [n]$ and for every
$j \in [L]$, $A_j^r$ makes at most $l$ queries to $r$.

\end{enumerate}
As before, $\rho$ is the \emph{list decoding radius}, $L$ is the \emph{list
size}, and $l$ is the \emph{query complexity}.
\end{definition}

\subsection{Affine-invariant codes}

\begin{definition}[Affine-invariant code]
A code $\cC \subseteq \{\F_Q^m \to \F_q\}$ is \emph{affine-invariant} if for
every $f \in \cC$ and affine permutation $A:\F_Q^m \to \F_Q^m$, the function
$x \mapsto f(A(x))$ is in $\cC$.
\end{definition}

\begin{definition}[Degree set]
For a function $f:\F_Q \to \F_q$, written as $f = \sum_{d=0}^{Q-1} f_d X^d$,
its \emph{support} is $\supp(f) := \{d \in \{0,\ldots,Q-1\} \mid f_d \ne 0\}$.
If $\cC \subseteq \{\F_Q \to \F_q\}$ is an affine-invariant code, then its
\emph{degree set} $\Deg(\cC)$ is
\[
\Deg(\cC) := \bigcup_{f \in \cC} \supp(f).
\]
\end{definition}

\begin{proposition}[\cite{BGMSS}]
\label{proposition:affine-invariant code dimension}
If $\cC \subseteq \{\F_{q^m} \to \F_q\}$ is a linear affine-invariant code, 
then $\dim_{\F_q}(\cC) = |\Deg(\cC)|$.
\end{proposition}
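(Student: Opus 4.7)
The plan is to show $\cC = V_D$ where $D := \Deg(\cC)$ and $V_D := \{f : \F_{q^m} \to \F_q \mid \supp(f) \subseteq D\}$. The inclusion $\cC \subseteq V_D$ is immediate from the definition of $D$, so the tasks reduce to (i) computing $\dim_{\F_q} V_D = |D|$ and (ii) proving $V_D \subseteq \cC$.

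For (i), I first observe that $D$ is closed under the Frobenius action $\sigma: d \mapsto qd \modstar{q^m-1}$: the identity $f^q = f$ as functions on $\F_{q^m}$, read as a polynomial identity modulo $X^{q^m}-X$, forces $f_{\sigma(d)} = f_d^q$, so $\supp(f)$ is $\sigma$-closed. Decomposing $D$ into $\sigma$-orbits $O_1, \ldots, O_s$ with representatives $d_j$ and sizes $t_j = |O_j|$, the Frobenius relation exhibits an $\F_q$-linear bijection $V_D \cong \prod_j \F_{q^{t_j}}$ (the coefficient $f_{d_j}$ lies in $\F_{q^{t_j}}$ because the $t_j$-th Frobenius iterate fixes $d_j$, and determines the remaining $f_{\sigma^i(d_j)}$), so $\dim_{\F_q} V_D = \sum_j t_j = |D|$.

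For (ii), I construct ``projection onto $O_j$'' operators inside $\cC$ out of $\F_q$-combinations of dilations. For $\alpha = (\alpha_a)_{a \in \F_{q^m}^*} \in \F_q^{q^m-1}$, set $P_\alpha f := \sum_a \alpha_a f(aX)$, which lies in $\cC$ by affine-invariance and $\F_q$-linearity; its $X^d$-coefficient equals $f_d \cdot T_\alpha(d)$ where $T_\alpha(d) := \sum_a \alpha_a a^d$, and the relation $\alpha_a^q = \alpha_a$ yields the Frobenius equivariance $T_\alpha(\sigma(d)) = T_\alpha(d)^q$. The key lemma is that $\alpha \mapsto (T_\alpha(d_j))_j$ is an $\F_q$-linear bijection from $\F_q^{q^m-1}$ onto $\prod_j \F_{q^{t_j}}$ (with the product now running over all $\sigma$-orbits in $\{1, \ldots, q^m-1\}$): injectivity follows from the invertibility over $\F_{q^m}$ of the matrix $\bigl(a^d\bigr)_{a \in \F_{q^m}^*,\, d \in \{1, \ldots, q^m-1\}}$ (a Vandermonde argument using $a^{q^m-1} = 1$ for $a \in \F_{q^m}^*$), and the two spaces have matching $\F_q$-dimension $q^m - 1$.

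Given an orbit $O_j \subseteq D$ (take $d_j \ne 0$; the case $d_j = 0$ is handled analogously using translations), there exists $f \in \cC$ with $f_{d_j} =: c \ne 0$, and automatically $c \in \F_{q^{t_j}}^*$. Picking $\alpha$ with $T_\alpha(d_j) = 1$ and $T_\alpha(d_k) = 0$ for $k \ne j$ gives $P_\alpha f \in \cC \cap V_{O_j}$ with $d_j$-coefficient $c$. For any target $g \in V_{O_j}$ with $g_{d_j} = c' \in \F_{q^{t_j}}$, pick $\alpha'$ with $T_{\alpha'}(d_j) = c'/c$ (well-defined since $c \in \F_{q^{t_j}}^*$) and $T_{\alpha'}(d_k) = 0$ for $k \ne j$; then $P_{\alpha'}(P_\alpha f) \in \cC \cap V_{O_j}$ has $d_j$-coefficient $c \cdot c'/c = c'$, so it equals $g$. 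Hence $V_{O_j} \subseteq \cC$ for every $j$, and summing over orbits yields $V_D \subseteq \cC$. The main obstacle is the Fourier--Vandermonde bijection over $\F_q$: this is the content that allows $\F_q$-combinations of dilations (rather than the more flexible $\F_{q^m}$-combinations) to both isolate a single Frobenius orbit and realize any prescribed value in its coefficient field.
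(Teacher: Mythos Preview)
The paper does not give its own proof of this proposition; it is simply quoted from \cite{BGMSS}. So there is nothing to compare your argument against in the present paper, and your write-up should be judged on its own merits.

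Your approach---showing the stronger statement $\cC = V_D$ by using $\F_q$-combinations of dilations to isolate individual Frobenius orbits---is the standard route in the affine-invariant codes literature and is essentially correct. The Frobenius closure of $\supp(f)$, the orbit decomposition giving $\dim_{\F_q} V_D = |D|$, and the Vandermonde/DFT argument for the bijection $\alpha \mapsto (T_\alpha(d_j))_j$ are all sound.

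There is one small gap you should patch. You flag $d_j = 0$ as needing separate treatment, but $d_j = q^m-1$ does too: since $a^0 = a^{q^m-1} = 1$ for all $a \in \F_{q^m}^*$, one has $T_\alpha(0) = T_\alpha(q^m-1)$ identically, so the $P_\alpha$ that isolates the orbit $\{q^m-1\}$ also retains the constant term of $f$, and $P_\alpha f$ lands in $V_{\{0,\,q^m-1\}}$ rather than $V_{\{q^m-1\}}$. The fix is easy: first show $1 \in \cC$ whenever $\cC \neq 0$ (e.g.\ observe that $\sum_{a \in \F_{q^m}^*} f(aX) = -f_0 - f_{q^m-1}X^{q^m-1} \in \cC$, and if $f_{q^m-1} \neq 0$ then $\sum_{b \in \F_{q^m}}(f_0 + f_{q^m-1}(X+b)^{q^m-1}) = -f_{q^m-1} \in \cC$), after which you can subtract off the constant to isolate $X^{q^m-1}$. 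With that adjustment the argument is complete.
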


In particular, if $S$ is an interpolating set for an affine-invariant code
$\cC \subseteq \{\F_{q^m} \to \F_q\}$, then $|S| = |\Deg(\cC)|$.
Proposition~\ref{proposition:affine-invariant code dimension} will be used in
Appendix~\ref{section:interpolating set}.

\subsection{Lifted codes}

\begin{definition}[Lift]
Let $\cC \subseteq \{\F_q \to \F_q\}$ be an affine-invariant code.
For integer $m \ge 2$, the \emph{$m$-th dimensional lift of $\cC$},
$\Lift_m(\cC)$, is the code
\[
\Lift_m(\cC) :=
\{f:\F_q^m \to \F_q \mid
f|_{\ell} \in \cC~\text{for every line $\ell$ in $\F_q^m$}\}
\]
\end{definition}

Let $\RS(q, d)$ be the Reed-Solomon code  of degree $d$ over $\F_q$,
\[
\RS(q,d) :=
\{f:\F_q \to \F_q \mid \deg(f) \le d\}.
\]
\begin{definition}[Lifted Reed-Solomon code]
The \emph{$m$-variate lifted Reed-Solomon code of degree $d$ over $\F_q$} is the
code
\[
\LiftedRS(q, d, m) := \Lift_m(\RS(q, d)).
\]
\end{definition}

For positive integers $d, e$, we say \emph{$e$ is in the $p$-shadow of $d$}, or
$e \le_p d$, if $d$ dominates $e$ digit-wise in base $p$: in other words, if
$d = \sum_{i \ge 0} d^{(i)} p^i$ and $e = \sum_{i \ge 0} e^{(i)} p^i$ are the
$p$-ary representations, then $e^{(i)} \le d^{(i)}$ for all $i \ge 0$.
We define the notion of $p$-shadow for vectors recursively as follows.
A vector $(e_1,\ldots,e_m)$ is in the $p$-shadow of $d$, denoted by
$(e_1,\ldots,e_m) \le_p d$, if $e_1 \le_p d$ and
$(e_2,\ldots,e_m) \le_p d - e_1$. It follows easily from the definition that
if $(e_1,\ldots,e_m) \le_p d$, then $\sum_{i=1}^m e_i \le d$.
The following fact motivates these definitions.

\begin{proposition}[Lucas' theorem]
\label{proposition:lucas}
Let $e_1,\ldots,e_m$ be positive integers and $d = e_1 + \cdots + e_m$ and
let $p$ be a prime. The multinomial coefficient
${d \choose e_1,\ldots,e_m} = \frac{d!}{e_1! \cdots e_m!}$ is nonzero modulo
$p$ if only if $(e_1,\ldots,e_m) \le_p d$.
\end{proposition}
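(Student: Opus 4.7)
The plan is to reduce the claim to the classical (scalar) Lucas theorem for binomial coefficients and then unwind the vector-shadow definition by a digit-wise comparison argument. Concretely, I will factor the multinomial coefficient as a product of binomial coefficients via
\[
\binom{d}{e_1,\ldots,e_m} = \binom{d}{e_1}\binom{d-e_1}{e_2}\cdots\binom{e_{m-1}+e_m}{e_{m-1}},
\]
and apply ordinary Lucas to each factor modulo $p$. Writing the base-$p$ expansions $d = \sum_j d^{(j)} p^j$ and $e_i = \sum_j e_i^{(j)} p^j$, this yields that $\binom{d}{e_1,\ldots,e_m} \not\equiv 0 \pmod p$ forces, at every digit position $j$, the digit-wise sum $\sum_{i=1}^m e_i^{(j)}$ to equal $d^{(j)}$ (equivalently, the base-$p$ addition $e_1 + \cdots + e_m = d$ carries nowhere).

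Next I would translate this carry-free condition into the vector $p$-shadow property as defined in the paper. Suppose $f_1, \ldots, f_m$ are any nonnegative integers with $f_i \le_p e_i$ for every $i$. By definition of the scalar shadow, $f_i^{(j)} \le e_i^{(j)}$ for each digit position $j$. Summing over $i$ and using the carry-free identity $\sum_i e_i^{(j)} = d^{(j)} \le p-1$ (which in particular means there are no carries in the digit-wise sum for the $f_i$'s either), I get
\[
\sum_{i=1}^m f_i \;=\; \sum_j \Bigl(\sum_{i=1}^m f_i^{(j)}\Bigr) p^j \;\le\; \sum_j d^{(j)} p^j \;=\; d.
\]
Hence $(e_1,\ldots,e_m)$ lies in the $p$-shadow of $d$ in the vector sense of the paper, completing the proof.

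There is no real obstacle here; the only mild subtlety is that the paper overloads the notation $\le_p$ between scalars and vectors, so I would be careful to distinguish the scalar Lucas conclusion (``no carries in the base-$p$ addition'') from the vector shadow condition (``every tuple of scalar shadows of the $e_i$'s still sums to at most $d$''). The carry-free condition is the stronger and more natural one, and it cleanly implies the shadow condition via the digit-wise inequality above. Since the proposition is stated only as a one-way implication (``only if''), I would not attempt to prove the converse, though it is worth noting that the converse also holds with essentially the same argument run in reverse.
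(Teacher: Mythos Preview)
Your argument is correct. The factorization of the multinomial coefficient into a telescoping product of binomials, the application of scalar Lucas to each factor to obtain the carry-free condition $\sum_i e_i^{(j)} = d^{(j)}$, and the digit-wise inequality to deduce $\sum_i f_i \le d$ for any $f_i \le_p e_i$ are all sound. You are right to flag the overloading of $\le_p$; your translation from the carry-free condition to the paper's vector-shadow condition is handled cleanly.

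As for the comparison you asked about: the paper does not actually give a proof of this proposition. It is stated in the preliminaries as a classical fact (Lucas' theorem) and invoked only to motivate the shadow definitions; no argument is supplied. So there is nothing to compare your approach against---you have simply filled in a proof where the paper relies on the reader's background. Your route via the binomial factorization is the standard one and is perfectly appropriate here.
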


For integers $a \ge 0$ and $Q > 1$, we define the mod-star operator by
$a \modstar{Q} = 0$ if $a = 0$ and $a \modstar{Q} = b \in [Q-1]$
if $a \ne 0$ and $a \equiv b \pmod{Q-1}$.
This is motivated by the fact that $X^{d}$ defines the same function as
$X^{d \modstar{q}}$ over $\F_q$.
\begin{remark}
\label{remark:modstar}
For $b \in [Q-1]$, note that $a \modstar{Q} \le b$ if and only if
there is some integer $k \ge 0$ such that $a \in [k \cdot (Q-1) + 1,
k \cdot (Q-1) + b]$.
\end{remark}

\begin{proposition}[\cite{GKS}]
\label{proposition:lifted RS degree set}
The lifted Reed-Solomon code $\LiftedRS(q, d, m)$ is spanned by monomials
$\prod_{i=1}^m X_i^{d_i}$ such that for every $e_i \le_p d_i$, $i \in [m]$,
we have $\sum_{i=1}^m e_i \modstar{q} \le d$.
\end{proposition}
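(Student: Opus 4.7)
The plan is to reduce to the case of single monomials, then compute their restrictions to generic lines via the multinomial theorem and Lucas' theorem, and finally use linear independence to isolate the condition on $\vec{d}$. First, I would observe that $\LiftedRS(q,d,m)$ is closed under the diagonal torus action $(X_1,\ldots,X_m) \mapsto (\lambda_1 X_1, \ldots, \lambda_m X_m)$ for $\lambda_i \in \F_q^*$ (a special case of affine invariance, since these maps send lines to lines). A standard character argument on $(\F_q^*)^m$ then shows that the code is $\F_q$-spanned by monomials $X^{\vec{d}} := \prod_i X_i^{d_i}$ with $0 \le d_i \le q-1$, so it suffices to decide exactly which such monomials lie in the code.

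Next, fix $X^{\vec{d}}$ and a line $\ell = \{a + bt : t \in \F_q\}$ with $a, b \in \F_q^m$. The multinomial theorem gives
\[
X^{\vec{d}}|_\ell(t) = \prod_{i=1}^m (a_i + b_i t)^{d_i} = \sum_{\vec{e}:\, 0 \le e_i \le d_i} \left(\prod_{i=1}^m \binom{d_i}{e_i} a_i^{d_i - e_i} b_i^{e_i}\right) t^{\sum_i e_i}.
\]
By Lucas' theorem, $\prod_i \binom{d_i}{e_i} \not\equiv 0 \pmod p$ exactly when $e_i \le_p d_i$ for all $i$, so only those $\vec{e}$ contribute nontrivially. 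Reducing modulo $t^q - t$ then converts each surviving $t^{\sum_i e_i}$ into $t^{(\sum_i e_i) \modstar{q-1}}$, yielding
\[
X^{\vec{d}}|_\ell(t) = \sum_{k=0}^{q-1} c_k(a, b)\, t^k, \quad c_k(a,b) := \sum_{\substack{\vec{e}:\, e_i \le_p d_i \\ (\sum_i e_i) \modstar{q-1} = k}} \prod_{i=1}^m \binom{d_i}{e_i}\, a_i^{d_i - e_i} b_i^{e_i}.
\]

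Finally, $X^{\vec{d}} \in \LiftedRS(q,d,m)$ iff $c_k(a,b) \equiv 0$ as a function on $\F_q^{2m}$ for every $k > d$. Since the monomials $\prod_i a_i^{d_i - e_i} b_i^{e_i}$ indexed by distinct $\vec{e}$ are themselves distinct and have individual exponents at most $q-1$, they are $\F_q$-linearly independent as functions on $\F_q^{2m}$. Hence $c_k \equiv 0$ iff no $\vec{e}$ with $e_i \le_p d_i$ satisfies $(\sum_i e_i) \modstar{q-1} = k$, and ranging over all $k > d$ yields exactly the condition in the proposition. The main obstacle I anticipate is the careful bookkeeping around the $\modstar{q-1}$ reduction---making sure the functional identity $t^a = t^{a \modstar{q-1}}$ on $\F_q$ is applied correctly (including the edge case where $a$ is a positive multiple of $q-1$), so that the extracted condition matches the proposition's statement verbatim.
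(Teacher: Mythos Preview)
The paper does not prove this proposition; it is quoted from \cite{GKS} as a known result. Your proposal is essentially the argument given there, and it is correct in outline. Two small remarks. First, the diagonal torus $(\F_q^*)^m$ alone does not separate exponent $0$ from exponent $q-1$ (both give the trivial character), so the character argument you sketch only decomposes the code into torus-eigenspaces, each of which may contain several monomials; to get all the way down to a monomial basis you need translations as well, i.e.\ the standard Kaufman--Sudan fact that linear affine-invariant codes over $\F_q$ are spanned by monomials, which you can simply cite. Second, your caution about the $\modstar{(q-1)}$ edge case is warranted: for $k \ge 1$ the function $t^{k(q-1)}$ equals $t^{q-1}$ on $\F_q$, not $t^0$, so the reduction must land in $\{1,\ldots,q-1\}$ for positive exponents.
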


\begin{proposition}[\cite{GKS}]
\label{proposition:lifted RS distance}
The lifted Reed-Solomon code $\LiftedRS(q, d, m)$ has distance
\[
\delta(\LiftedRS(q, d, m)) \ge \delta(\RS(q, d)) - q^{-1}.
\]
\end{proposition}

\subsection{Finite field isomorphisms}
\label{subsection:isomorphism}
Let $\Tr:\F_{q^m} \to \F_q$ be the $\F_q$-linear trace map
$z \mapsto \sum_{i=0}^{m-1} z^{q^i}$.
Let $\alpha_1,\ldots,\alpha_m \in \F_{q^m}$ be linearly independent over $\F_q$
and let $\phi:\F_{q^m} \to \F_q^m$ be the map
$z \mapsto (\Tr(\alpha_1 z),\ldots,\Tr(\alpha_m z))$. Since $\Tr$ is
$\F_q$-linear, $\phi$ is an $\F_q$-linear map and in fact it is an isomorphism.
Observe that $\phi$ induces a $\F_q$-linear isomorphism
$\phi^*:\{\F_q^m \to \F_q\} \to \{\F_{q^m} \to \F_q\}$ defined by
$\phi^*(f)(x) = f(\phi(x))$ for all $x \in \F_{q^m}$.

\section{Global list decoding}
\label{section:global list decoding}
In this section, we present an efficient global list decoding algorithm for
$\LiftedRS(q, d, m)$.
Define $\alpha_1,\ldots,\alpha_m \in \F_{q^m}$, $\phi$, and $\phi^*$ as in
Section~\ref{subsection:isomorphism}.
The key new structural result, Theorem~\ref{theorem:global list decoding}, states that
$\LiftedRS(q, d, m) \subseteq \{\F_q^m \to \F_q\}$
is isomorphic to a subcode of $\RS(q^m,(d+m)q^{m-1}) \subseteq
\{\F_{q^m} \to \F_q\}$. 
In particular, this lets us list decode $\LiftedRS(q, d, m)$ by list decoding
$\RS(q^m, (d+m)q^{m-1})$ up to the Johnson radius.
We will use this algorithm for $m=2$ as a subroutine in our local list decoding
algorithm in Section~\ref{section:local list decoding}.

\subsection{Lifted Reed-Solomon codes are subcodes of Reed-Solomon codes}

We begin with a lemma on monomials in lifted Reed-Solomon codes.
We postpone the proof of this lemma to Section~\ref{subsection:lemmas}.
\begin{lemma}
\label{lemma:a<=b}
Let $h_1,\ldots,h_m$ satisfy $\prod_{i=1}^m X_i^{h_i} \in \LiftedRS(q,d,m)$,
where $d < q-m$. Write $\sum_{i=1}^m h_i = a(q-1) + b$, where $0 \le a \le m$
and $0 \le b \le d$. Then $a \le b$.
\end{lemma}

We now state and prove our main structural theorem, which shows that codewords of an $m$-variate lifted Reed-Solomon code over $\F_q$ are low degree when viewed as univariate polynomials over $\F_{q^m}$.

\begin{theorem}
\label{theorem:global list decoding}
Let $d < q - m$.
If $f \in \LiftedRS(q, d, m)$, then $\deg(\phi^*(f)) \le (d + m)q^{m-1}$.
\end{theorem}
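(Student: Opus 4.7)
The plan is to combine linearity of $\phi^*$ with Proposition~\ref{proposition:lifted RS degree set} to reduce the claim to a single spanning monomial $f = \prod_{i=1}^m X_i^{d_i}$ of $\LiftedRS(q,d,m)$. For such an $f$, one has $\phi^*(f)(z) = \prod_{i=1}^m \Tr(\alpha_i z)^{d_i}$, and since $\Tr(\alpha_i z) = \sum_{j=0}^{m-1} \alpha_i^{q^j} z^{q^j}$ is an $\F_q$-linearized polynomial of $z$-degree $q^{m-1}$, I would expand each factor by the multinomial theorem. By Lucas' theorem, the surviving monomials are indexed by tuples $(c_{i,j})$ with $\sum_j c_{i,j} = d_i$ and $c_{i,j} \le_p d_i$ digit-wise in base $p$, each contributing a term $z^{a}$ with $a = \sum_{i,j} c_{i,j} q^j$.

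The heart of the argument is then to bound $a \bmod (q^m-1)$ for every such term. I would write $a = A\, q^{m-1} + B$ with $A = \sum_i c_{i,m-1}$ (the total contribution at the top position) and $B = \sum_{j<m-1}(\sum_i c_{i,j})q^j$ (the rest). The key step is to apply Proposition~\ref{proposition:lifted RS degree set} with the choice $e_i := c_{i,m-1}$, which is admissible precisely because $c_{i,m-1} \le_p d_i$ by the Lucas constraint; this yields the crucial inequality $A \modstar{q-1} \le d$. When $A \le q-1$ and no internal carries from $B$ are triggered, the leading base-$q$ digit of $a$ is exactly $A \le d$, giving immediately $a \le d\cdot q^{m-1} + (q^{m-1}-1) < (d+1)q^{m-1}$. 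When $A \ge q$, the congruence $q^m \equiv 1 \pmod{q^m-1}$ forces a wrap-around that replaces $A$ by $A \bmod q$ and adds $\lfloor A/q\rfloor$ into the lower positions; combining $A \modstar{q-1} \le d$ with the crude bound $A \le \sum_i d_i$ and tracking the resulting carries in $B$ then gives the final estimate $a \bmod (q^m-1) \le (d+m)q^{m-1}$, where the additive slack of $mq^{m-1}$ in the bound is exactly what is needed to absorb the wrap-around contributions.

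The main obstacle is the careful bookkeeping of this modular reduction, especially in the combined case when $A \ge q$ (triggering a top wrap-around) and some $\sum_i c_{i,j} \ge q$ for $j < m-1$ (triggering internal base-$q$ carries into the top position). One must verify that after all these interacting reductions, the leading base-$q$ digit of $a \bmod (q^m-1)$ stays below the allowed threshold. The Lucas constraint $c_{i,j} \le_p d_i$ is essential throughout, since it is exactly what legitimizes applying Proposition~\ref{proposition:lifted RS degree set} to the specific tuple $(c_{i,m-1})_i$, thereby yielding the degree control at the top base-$q$ digit after reduction and closing the argument.
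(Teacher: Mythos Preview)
Your plan is essentially the paper's own proof: linearity reduces to a spanning monomial, the multinomial expansion of $\prod_i \Tr(\alpha_i z)^{d_i}$ together with Lucas' theorem restricts to tuples with each $c_{i,j} \le_p d_i$, and Proposition~\ref{proposition:lifted RS degree set} applied to $e_i := c_{i,m-1}$ gives $A \modstar{q-1} \le d$ at the top base-$q$ position. The only difference is in the final bookkeeping: you anticipate a delicate case analysis on wrap-around and interacting internal carries, whereas the paper sidesteps this entirely by crudely bounding the lower part as $B \le q^{m-2}\sum_{j<m-1}\sum_i c_{i,j} \le q^{m-2}\sum_i d_i < m q^{m-1}$, writing $A \in [kq,\,k(q-1)+d]$ for some $0 \le k < m$, and verifying in a single chain of inequalities that $k(q^m-1) \le a \le k(q^m-1)+(d+m)q^{m-1}$---so the obstacle you flag dissolves once you bound $B$ globally instead of tracking base-$q$ digits.
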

\begin{proof}
	By Proposition~\ref{proposition:lifted RS degree set} and linearity, it suffices to prove this for a monomial $f(X_1,\ldots,X_m) =
	\prod_{i=1}^m X_i^{d_i}$, where $d_1, \ldots, d_m$ have the property that for every $e_1, \ldots, e_m$ with $e_i \leq_p d_i$, we have $\sum_{i=1}^m e_i \modstar{q}\leq d$.


For $z \in \F_{q^m}$, by the multinomial theorem we get the following expansion:
\begin{align*}
	\phi^*(f)(z) &= f(\phi(z))\\
			    &= \prod_{i=1}^m (\Tr(\alpha_i z))^{d_i}\\
	      &= \prod_{i=1}^m  (\sum_{k=0}^{m-1}(\alpha_i z)^{q^k})^{d_i}\\
	      &= \prod_{i=1}^m \left( \sum_{e_{i,0}, e_{i,1}, \ldots, e_{i,m-1} \mbox{ s.t. } \sum_j e_{i,j} = d_i} { d_i \choose e_{i,0}, \ldots, e_{i,m-1} } \cdot \prod_{j=1}^m (\alpha_i z)^{e_{i,j} q^j} \right)\\
	      &= \sum_{( e_{i,j} )_{1 \leq i \leq m, 0 \leq j \leq m-1} \mbox{ s.t. } \sum_{j} e_{i,j} = d_i} \left( \prod_{i} { d_i \choose e_{i,0}, \ldots, e_{i,m-1} }  \prod_{j} (\alpha_i z)^{e_{i,j} q^j}\right) \\
	      &= \sum_{( e_{i,j} )_{1 \leq i \leq m, 0 \leq j \leq m-1} \mbox{ s.t. } \sum_{j} e_{i,j} = d_i} \left(  \left( \prod_{i} { d_i \choose e_{i,0}, \ldots, e_{i,m-1} }  \prod_{j} (\alpha_i)^{e_{i,j} q^j}\right) \cdot z^{\sum_{j=0}^{m-1} (\sum_{i=1}^{m} e_{i,j} ) \cdot q^j} \right).
\end{align*}
We now use Lucas' theorem to understand the multinomial coefficients, (in a similar manner to Lemma B.2 and Proposition 2.8 in~\cite{GKS}), and this tells us that many terms in this sum equal $0$.
So we get that $\phi^*(f)(z)$ is of the form:
\begin{align*}
	\phi^*(f)(z) &= \sum_{(e_{ij})_{1 \leq i \leq m, 0 \leq j \leq m-1} \mbox{ s.t. } e_{ij} \leq_{p} d_i }  (\ldots) \cdot z^{\sum_{j=0}^{m-1}  (\sum_{i=1}^{m} e_{i,j} ) q^{j} }.	
\end{align*}
To conclude the proof of this theorem, we just need to show that
the only monomials $z^t$ that appear in the above expression are all
such that $t \modstar{q^m}$ is at most $(d+m) \cdot q^{m-1}$.
Concretely, we need to show that
whenever $(e_{i,j})_{1 \leq i \leq m, 0 \leq j \leq m-1}$ satisfy
(1) $e_{i,j} \leq_p d_i$ for all $i,j$, and (2) $\sum_{j=0}^{m-1} e_{i,j} = d_i$, then
we have the bound
$$ E := \sum_{j=0}^{m-1} \left(\sum_{i=1}^m e_{i,j}\right) q^{j} \modstar{q^m} \le (d+m)q^{m-1}.$$

Recall that Proposition~\ref{proposition:lifted RS degree set}
allowed us to assume that $d_1, \ldots, d_m$ have the property that
for every $e_i \le_p d_i$, $i \in [m]$, we have
$\sum_{i=1}^m e_i \modstar{q} \le d$.
Therefore, $\sum_{i=1}^m e_{i,m-1} = a(q-1) + b$ for some $0 \le a \le m$
and $0 \le b \le d$.

We now proceed to give upper and lower bounds on $E$, which will then enable us to show that $E \modstar{q^m} \leq (d+m)q^{m-1}$.
We start with the upper bound:
\begin{eqnarray*}
E &=& q^{m-1} \sum_{i=1}^m e_{i, m-1} + \sum_{j=0}^{m-2} \sum_{i=1}^m e_{i,j}
q^{j} \\
&\le& q^{m-1} \sum_{i=1}^m e_{i, m-1} + q^{m-2} \sum_{j=0}^{m-2} \sum_{i=1}^m
e_{i,j} \\
&\le& q^{m-1} \cdot (a(q-1) + d) + q^{m-2} \sum_{j=0}^{m-2}\sum_{i=1}^m q \\
&=& aq^{m-1}(q-1) + (d + m)q^{m-1} \\
&\le& a(q^m-1) + (d + m) q^{m-1}.
\end{eqnarray*}
We proceed with the lower bound. If $a = 0$, then $E \ge 0$.
Suppose $a \ge 1$. Since $\le_p$ is transitive, by
Proposition~\ref{proposition:lifted RS degree set}, the monomial $\prod_{i=1}^m
X_i^{e_{i,m-1}} \in \LiftedRS(q,d,m)$. Recall that $\sum_{i=1}^m e_{i, m-1} = a(q-1) + b$. Thus by Lemma~\ref{lemma:a<=b},
$a \le b$. Therefore,
\begin{eqnarray*}
E &=& q^{m-1} \sum_{i=1}^m e_{i, m-1} + \sum_{j=0}^{m-2} \sum_{i=1}^m e_{i,j}
q^{j} \\
&\ge& q^{m-1} \sum_{i=1}^m e_{i, m-1} \\
&=& q^{m-1} (a(q-1) + b) \\
&=& aq^m + (b-a)q^{m-1} \\
&=& a(q^m-1) + (b-a)q^{m-1} + a \\
&\ge& a(q^m-1) + 1.
\end{eqnarray*}

To summarize, if $a = 0$, then $0 \leq E \leq (d+m)\cdot q^{m-1}$,
and if $a \geq 1$, then $a(q^m-1) + 1 \le E \le a(q^m-1) + (d+m)q^{m-1}$.
In both cases, we get that
$E \modstar{q^m} \le (d+m)q^{m-1}$, as desired.
\end{proof}

\begin{corollary}
\label{corollary:global list decoding}
There is a polynomial time global list decoding algorithm for $\LiftedRS(q,d,m)$
that decodes up to $1 - \sqrt{\frac{d+m}{q}}$ fraction errors.
In particular, if $m = O(1)$ and $d = (1 - \delta)q$, then
$\delta(\LiftedRS(q,d,m)) = \delta - o(1)$
and the list decoding algorithm decodes up to $1 - \sqrt{1 - \delta} - o(1)$ 
fraction errors as $q \to \infty$.
\end{corollary}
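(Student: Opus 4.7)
The plan is to reduce the list decoding problem for $\LiftedRS(q,d,m)$ to list decoding of an explicit Reed--Solomon code over the extension field $\F_{q^m}$, via Theorem~\ref{theorem:global list decoding}.

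Given a received word $r:\F_q^m \to \F_q$, I would first form $r^\ast = \phi^\ast(r) : \F_{q^m} \to \F_q$. Since $\phi$ is a bijection, for every $f:\F_q^m\to\F_q$ the fraction of disagreements is preserved: $\delta(f,r) = \delta(\phi^\ast(f), r^\ast)$. By Theorem~\ref{theorem:global list decoding}, $\phi^\ast$ maps $\LiftedRS(q,d,m)$ into $\RS(q^m, (d+m)q^{m-1})$. Therefore the codewords $f \in \LiftedRS(q,d,m)$ with $\delta(f,r) \le \rho$ correspond bijectively to a subset of the codewords of $\RS(q^m, (d+m)q^{m-1})$ within distance $\rho$ of $r^\ast$. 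I would then invoke the Guruswami--Sudan algorithm on $r^\ast$; since the Reed--Solomon code has rate $((d+m)q^{m-1}+1)/q^m = (d+m)/q + o(1)$, it list decodes up to the Johnson radius $1 - \sqrt{(d+m)/q}$ in time $\poly(q^m)$, producing a list of size $\poly(q^m)$. Finally I apply $(\phi^\ast)^{-1}$ to each candidate and filter, keeping only those whose preimages actually lie in $\LiftedRS(q,d,m)$. Membership in a lifted code can be checked in time $\poly(q^m)$ (for instance by checking the degree of the restriction to every line of $\F_q^m$, or by comparing against the spanning-monomial description from Proposition~\ref{proposition:lifted RS degree set}), so the whole pipeline runs in polynomial time.

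For the asymptotic ``in particular'' clause, plug in $d = (1-\delta)q$ with $m = O(1)$. The decoding radius becomes
\[
1 - \sqrt{\tfrac{d+m}{q}} = 1 - \sqrt{1-\delta + \tfrac{m}{q}} = 1 - \sqrt{1-\delta} - o(1)
\]
as $q \to \infty$. For the distance claim, I would give matching upper and lower bounds. The upper bound $\delta(\LiftedRS(q,d,m)) \le \delta$ comes from a univariate example such as $f(X_1,\ldots,X_m) = \prod_{i=1}^{d}(X_1 - a_i)$ for distinct $a_i \in \F_q$, which vanishes on exactly a $d/q = 1-\delta$ fraction of $\F_q^m$. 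The lower bound uses the standard ``lines through a non-zero point'' argument: given $f \neq 0$, fix $x$ with $f(x) \neq 0$; every other point lies on a unique line through $x$; the restriction of $f$ to any such line is a non-zero univariate polynomial of degree $\le d$ (as $f|_\ell \in \RS(q,d)$), so has at most $d$ zeros, none of which is $x$. Summing over the $(q^m-1)/(q-1)$ lines through $x$ gives $|Z(f)| \le \frac{d(q^m-1)}{q-1}$, hence weight at least $1 - \frac{d}{q-1} = \delta - o(1)$.

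There is no real conceptual obstacle, since Theorem~\ref{theorem:global list decoding} already does the heavy lifting; the only points to handle carefully are (i) matching the Guruswami--Sudan guarantee to the rate $(d+m)/q$ of the Reed--Solomon code (the additive $+1$ in the dimension and the additive $m/q$ term are absorbed into the $o(1)$), and (ii) ensuring the filtering step correctly discards Reed--Solomon codewords whose $(\phi^\ast)^{-1}$-preimages are not genuine lifted codewords. Both are routine but worth stating explicitly.
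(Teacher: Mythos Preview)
Your proposal is correct and follows essentially the same route as the paper: apply $\phi^*$ to the received word, invoke Guruswami--Sudan for $\RS(q^m,(d+m)q^{m-1})$, and use Theorem~\ref{theorem:global list decoding} to conclude that every close lifted codeword appears in the resulting list. You are in fact more thorough than the paper's proof, which omits the filtering step and does not justify the distance claim $\delta(\LiftedRS(q,d,m)) = \delta - o(1)$; your lines-through-a-nonzero-point argument for the lower bound and the univariate example for the upper bound are both fine.
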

\begin{proof}
Given $r:\F_q^m \to \F_q$, convert it to $r' = \phi^*(r) \in \F_{q^m}$, and
then run the Guruswami-Sudan list decoder for $\RS := \RS(q^m, (d+m)q^{m-1})$
on $r'$ to obtain a list $\cL$ with the guarantee that any $f \in \RS$ with
$\delta(r', f) \le 1 - \sqrt{\frac{d+m}{q}}$ lies in $\cL$.
We require that any $f \in \LiftedRS(q,d,m)$ satisfying
$\delta(r, f) \le 1 - \sqrt{\frac{d+m}{q}}$ also satisfies
$\phi^*(f) \in \cL$, and this follows
immediately from Theorem~\ref{theorem:global list decoding}.
The fact that $\delta(\LiftedRS(q, d, m)) = \delta - o(1)$ when $m = O(1)$ and
$q = (1 - \delta)q$ follows immediately from
Proposition~\ref{proposition:lifted RS distance}.
\end{proof}

\subsection{Proof of Lemma~\ref{lemma:a<=b}}
\label{subsection:lemmas}

We begin with three simple claims about the $\leq_p$ relation.

\begin{claim}
\label{claim:one}
If $e \le_p h_1 + \cdots + h_m$, then there exist $e_1,\ldots,e_m$ such that
$e_i \le_p h_i$ for each $i \in [m]$ and $e_1 + \cdots + e_m = e$.
\end{claim}
\begin{proof}
The coefficient of $X^e$ in $(1 + X)^{h_1 + \cdots + h_m}$ is
$\sum_{e_1 + \cdots + e_m = e} \prod_{i=1}^m {h_i \choose e_i}$.
By Proposition~\ref{proposition:lucas}, the hypothesis implies that this
coefficient is nonzero modulo $p$, hence there is some choice of
$e_1 + \cdots + e_m = e$ such that $\prod_{i=1}^m {h_i \choose e_i}$ is nonzero
modulo $p$. By Proposition~\ref{proposition:lucas}, $e_i \le_p h_i$ for each
$i \in [m]$.
\end{proof}

\begin{claim}
\label{claim:two}
Let $c \ge 1$ and $k \le p^c/2$. If $0 \le x \le p^c - 2k + 1$, then there
exists $0 \le i \le k-1$ such that $x+i \le_p p^c - k$.
\end{claim}
\begin{proof}
Let $n := p^c - k$. We have the identity
\[
{n+k-1 \choose x+k-1} = \sum_{i=0}^{k-1} {n \choose x+i} {k-1 \choose i}
\]
from the fact that the LHS counts the number of ways of choosing $x+k-1$
elements from $[n+k-1]$, whereas the RHS counts the same thing by picking
$x+i$ elements from $[n]$ and picking $(k-1)-i$ elements from
$\{n+1,\ldots,n+k-1\}$.
The LHS is ${p^c-1 \choose x+k-1} \not\equiv 0 \pmod p$ by
Proposition~\ref{proposition:lucas}. Using the identity above,
there must be some $i$ such that ${p^c - k \choose x+i} = {n \choose x+i}
\not\equiv 0 \pmod p$. Again, by Proposition~\ref{proposition:lucas},
$x+i \le_p p^c - k$.
\end{proof}

\begin{claim}
\label{claim:three}
If $N = a(q-1) + b$, where $1 \le b < a \le m$ and $q$ is a power of prime $p$,
then there exists $e \le_p N$ such that $b < e \le m$.
\end{claim}
\begin{proof}ñ
Write $q = p^s$ and $a = p^c - r$, where $0 \le r < p^c$.
Then $N = aq - p^c + r+b = (a-1)q + (p-1)\sum_{i=c}^{s-1} p^i + (r+b)$.
But $r+b = p^c-(a-b) < p^c$, therefore $r+b \le_p N$. Therefore, it suffices
to find $e \le_p r+b$ such that $b < e \le a \le m$.
If $r+b \le a$, then we can simply take $e := r+b$.
Otherwise, if $a < r+b$, then $a-b < p^c/2$, for if not, then
$a \ge p^c/2$ and $r+b = p^c-(a-b) \le p^c/2$ and therefore $r+b \le a$,
a contradiction. By Claim~\ref{claim:two}, there exists $i \in [a-b]$ such
that $b+i \le_p p^c-(a-b) = r+b$. Set $e := b+i$.
\end{proof}

We can now complete the proof of Lemma~\ref{lemma:a<=b}.
\begin{proof}[Proof of Lemma~\ref{lemma:a<=b}]
If $a=0$, then the result trivially holds. Suppose $a \ge 1$. Then $b \ge 1$.
Suppose, for the sake of contradiction, that $a > b$.
By Claim~\ref{claim:three}, there exists $e \le_p h_1 + \cdots + h_m$ such
that $b < e \le m$. By Claim~\ref{claim:one}, there exist $e_1,\ldots,e_m$
such that $e_i \le_p h_i$ for $i \in [m]$ and $e_1 + \cdots + e_m = e$.
For $i \in [m]$, define $b_i := h_i - e_i$. Then $b_i \le_p h_i$, and so
by Proposition~\ref{proposition:lifted RS degree set} we have
$\sum_{i=1}^m b_i \modstar{q} \le d$. On the other hand,
$\sum_{i=1}^m b_i = \sum_{i=1}^m h_i - \sum_{i=1}^m e_i = a(q-1) + b - e$.
We can lower bound this by
\[
a(q-1) + b - e \ge a(q-1) + b - m \ge (a-1)(q-1) + q - m > (a-1)(q-1) + d
\]
and upper bound this by
\[
a(q-1) + b - e \le a(q-1) - 1 < (a-1)(q-1) + (q-1)
\]
and so $\sum_{i=1}^m b_i \modstar{q} > d$, a contradiction.
\end{proof}

\section{Local list decoding}
\label{section:local list decoding}

In this section, we present a local list decoding algorithm for 
$\LiftedRS(q,d,m)$, where $d = (1 - \delta)q$ which decodes up to radius
$1 - \sqrt{1 - \delta} - \epsilon$ for any constant $\epsilon > 0$, with
list size $\poly(\frac{1}{\epsilon})$ and query complexity $q^3$.

\paragraph{Local list decoder:}
Oracle access to received word $r: \F_q^m \to \F_q$.
\begin{enumerate}
\item%
Pick a random line $\ell$ in $\F_q^m$.

\item%
Run Reed-Solomon list decoder (e.g.\ Guruswami-Sudan) on $r|_\ell$ from
$1 - \sqrt{1 - \delta} - \frac{\epsilon}{2}$ fraction errors to get list
$g_1, \ldots, g_L: \F_q \to \F_q$ of Reed-Solomon codewords.

\item%
For each $i \in [L]$, output $\textsf{Correct}(A_{\ell, g_i})$

\end{enumerate}

where $\textsf{Correct}$ is a local correction algorithm for the lifted codes
for $0.1 \delta$ fraction errors, and $A$ is an oracle which takes as advice
a line and a univariate polynomial and simulates oracle access to a function
which is supposed to be $\ll 0.1\delta$ close to a lifted RS codeword.

\paragraph{Oracle $A_{\ell, g}(x)$:}

\begin{enumerate}
\item%
If $\ell$ contains $x$, i.e.\ $\ell = \{a + bt \mid t \in \F_q\}$ for some
$a,b \in \F_q^m$ and $x = a + bt$, then output $g(t)$.

\item%
Otherwise, let $P$ be the plane containing $\ell$ and $x$,
parametrized by $\{a + bt + (x - a)u \mid t,u \in \F_q\}$.
\begin{enumerate}
\item%
Use the global list decoder for bivariate lifted RS code given above to list
decode $r|_P$ from $1 - \sqrt{1 - \delta} - \frac{\epsilon}{2}$ fraction errors
and obtain a list $\cL$.

\item%
If there exists a unique $h(t,u) \in \cL$ such that $h|_\ell = g$, output
$h(0,1)$, otherwise fail.
\end{enumerate}

\end{enumerate}

\paragraph{Analysis:}
To show that this works, we just have to show that, with high probability
over the choice of $\ell$, for every lifted RS codeword $f$ such that
$\delta(r, f) \le 1 - \sqrt{1 - \delta} - \epsilon$, there is $i \in [L]$
such that $\textsf{Correct}(A_{\ell, g_i}) = f$, i.e.\
$\delta(A_{\ell, g_i}, f) \le 0.1 \delta$.

We will proceed in two steps:
\begin{enumerate}
\item%
First, we show that with high probability over $\ell$, there is some
$i \in [L]$ such that $f|_\ell = g_i$.

\item%
Next, we show that $\delta(A_{\ell, f|_\ell}, f) \le 0.1\delta$.

\end{enumerate}

For the first step, note that $f|_\ell \in \{g_1,\ldots,g_L\}$ if
$\delta(f|_\ell, r|_\ell) \le 1 - \sqrt{1 - \delta} - \frac{\epsilon}{2}$.
Note that $\delta(f|_\ell, r|_\ell)$ has mean $1 - \sqrt{1 - \delta} - \epsilon$
with variance less than $\frac{1}{q}$ (by pairwise independence of points on
a line), so by Chebyshev's inequality the probability that
$\delta(f|_\ell, r|_\ell) \le 1 - \sqrt{1 - \delta} - \frac{\epsilon}{2}$ is
$1 - o(1)$.

For the second step, we want to show that
$\Pr_{x \in \F_q^m}[A_{\ell,f|_\ell}(x) \ne f(x)] \le 0.1 \delta$.
First consider the probability when we randomize $\ell$ as well.
We get $A_{\ell, f|_\ell}(x) = f(x)$ as long as $f|_P \in \cL$ and
no element $h \in \cL$ has $h|_\ell = f|_\ell$.
With probability $1 - o(1)$, $\ell$ does no contain $x$, and conditioned on
this, $P$ is a uniformly random plane. It samples the space $\F_q^m$ well,
so with probability $1 - o(1)$ we have
$\delta(f|_P, r|_P) \le 1 - \sqrt{1 - \delta} - \frac{\epsilon}{2}$ and
hence $f|_P \in \cL$.
For the probability that no two codewords in $\cL$ agree on $\ell$,
view this as first choosing $P$, then choosing $\ell$ within $P$.
The list size $|\cL|$ is a constant, polynomial in $1/\epsilon$.
So we just need to bound the probability that two bivariate lifted RS codewords
agree on a uniformly random line. This is the same as the probability
that a uniformly random line in $\F_q^2$ is contained in the agreement set of
two fixed bivariate lifted RS codewords, which we know has size at most $(1-\delta + o(1))q^2$. By a standard second moment bound, this probability is at most
$\Theta(1/q)$. Thus, with probability $1 - o(1)$, $f|_P$ is the unique codeword in $\cL$
which is consistent with $f|_\ell$ on $\ell$. Therefore,
\begin{eqnarray*}
\Pr_{\ell}\left[\delta(A_{\ell, f|_\ell}, f|_\ell) > 0.1\delta\right]
&=& \Pr_{\ell}\left[\Pr_{x}[A_{\ell, f|_\ell}(x) \ne f(x)] > 0.1\delta\right] \\
&\le& \frac{\Pr_{\ell, x}[A_{\ell, f|_\ell}(x) \ne f(x)]}{0.1\delta}  \\
&=& o(1).
\end{eqnarray*}

As a corollary, we get the following testing algorithm.

\begin{theorem}
For any $\alpha < \beta < 1 - \sqrt{1 - \delta}$,
there is an $O(q^4)$-query algorithm which, given oracle access to a function
$r: \F_q^m \to \F_q$, distinguishes between the cases where $r$ is
$\alpha$-close to $\LiftedRS(q,d,m)$ and where $r$ is $\beta$-far.
\end{theorem}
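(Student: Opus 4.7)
The plan is to derive the tolerant tester from the local list decoder of Section~\ref{section:local list decoding}: run the decoder, estimate each output oracle's distance to $r$ by random sampling, and accept iff some estimate is below a threshold. Fix a constant $\epsilon > 0$ small enough that $\beta + 3\epsilon < 1 - \sqrt{1-\delta}$, set $\tau = (\alpha + \beta)/2$, and let $L = \poly(1/\epsilon)$. The algorithm is: (i) invoke the local list decoder on $r$ with parameter $\epsilon$, obtaining randomized oracles $B_1, \ldots, B_L$; (ii) amplify each $B_i$ to $\tilde B_i$ by independent repetition and majority vote, so that $\Pr[\tilde B_i(x) = c_i(x)] \ge 1 - o(1/q)$ for every $x$, whenever $B_i$ is associated with a codeword $c_i$; (iii) for each $i$, draw $s = O(\epsilon^{-2})$ uniform samples $x_1, \ldots, x_s \in \F_q^m$ and form $\hat\delta_i = \frac{1}{s}|\{j : \tilde B_i(x_j) \ne r(x_j)\}|$; (iv) accept iff $\min_i \hat\delta_i \le \tau$.

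Completeness is immediate from the local list decoder's guarantee. If $\delta(r,c) \le \alpha$ for some $c \in \LiftedRS(q,d,m)$, then since $\alpha < 1 - \sqrt{1-\delta} - \epsilon$, with high probability the decoder produces some $B_i$ that is pointwise $2/3$-correct for $c$; amplification boosts this to $(1-o(1))$-pointwise correctness, so $\E[\hat\delta_i] \le \delta(c,r) + o(1) \le \alpha + o(1)$, and Chebyshev gives $\hat\delta_i \le \tau$ with high probability. Soundness is the main difficulty: the local list decoder only guarantees completeness of its output list --- every close codeword is captured --- but allows ``spurious'' oracles whose underlying RS fragment $g_i$ extends to no close lifted codeword, and such a spurious $\tilde B_i$ could in principle be coincidentally $\tau$-close to $r$.

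To rule this out, I would prepend to the distance estimation a verification step: for each $i$, query $\tilde B_i$ on $O(1)$ uniformly random lines and check that every restriction is a univariate polynomial of degree $\le d$; discard $i$ if any check fails. By the known local testability of lifted codes, any $\tilde B_i$ passing verification is $o(1)$-close to some $c_i \in \LiftedRS(q,d,m)$, and since $r$ is $\beta$-far from the code, $\hat\delta_i \ge \delta(c_i, r) - o(1) \ge \beta - o(1) > \tau$. Combined with completeness, this yields the required distinguisher with error at most $1/3$. The query complexity is dominated by verification: each query to $\tilde B_i$ costs the decoder's native $O(q^3)$ times the $O(\log q)$ amplification overhead, and verification makes $O(q)$ queries per oracle, so the total is $O(q^4)$ up to logarithmic factors since $L = O(1)$. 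The main obstacle, as noted, is the soundness analysis --- excluding spurious decoder outputs --- for which the local-testability verification is the cleanest resolution and accounts for the $q^4$ rather than $q^3$ query bound.
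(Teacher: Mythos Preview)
Your approach is essentially the paper's own: run the local list decoder, combine distance estimation to $r$ by sampling with the local tester for lifted codes applied to the output oracles, using the tester to dispose of spurious oracles in the soundness case (you apply the tester before pruning by distance, the paper after, which is immaterial). One minor quibble: $O(1)$ random-line tests certify that a surviving $\tilde B_i$ is $\epsilon'$-close to the code for a small constant $\epsilon'$, not $o(1)$-close, but this is exactly what the triangle-inequality soundness step needs and matches what the paper does.
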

\begin{proof}
Let $\rho = (\alpha + \beta)/2$ and let $\epsilon = (\beta - \alpha)/8$,
so that $\alpha = \rho - 4\epsilon$ and $\beta = \rho + 4\epsilon$.
Let $T$ be a local testing algorithm for $\LiftedRS(q,d,m)$ with query
complexity $q$, which distinguishes between codewords and words that are
$\epsilon$-far from the code.
The algorithm is to run the local list decoding algorithm on $r$ with error
radius $\rho$ such that $\alpha < \rho < \beta$, to obtain a list of oracles
$M_1,\ldots,M_L$. For each $M_i$, we use random sampling to estimate the
distance between $r$ and the function computed by $M_i$ to within $\epsilon$
additive error, and keep only the ones with estimated distance less than $\rho + \epsilon$. Then, for each remaining $M_i$,
we run $T$ on $M_i$. We accept if $T$ accepts some $M_i$, otherwise we reject.

If $r$ is $\alpha$-close to $\LiftedRS(q,d,m)$, then it is $\alpha$-close to
some codeword $f$, and by the guarantee of the local list decoding algorithm
there is some $j \in [l]$ such that $M_j$ computes $f$. Moreover, this $M_j$
will not be pruned by our distance estimation. Since $f$ is a codeword,
this $M_j$ will pass the testing algorithm and so our algorithm will accept.

Now suppose $r$ is $\beta$-far from $\LiftedRS(q,d,m)$, and consider any oracle
$M_i$ output by the local list decoding algorithm and pruned by our distance
estimation. The estimated distance between $r$ and the function computed by 
$M_i$ is at most $\rho + \epsilon$, so the true distance is at most
$\rho + 2\epsilon$. Since $r$ is $\beta$-far from any codeword, that means
the function computed by $M_i$ is $(\beta - (\rho + 2\epsilon)) > \epsilon$-far
from any codeword, and hence $T$ will reject $M_i$.

All of the statements made above were deterministic, but the testing algorithm
$T$ and distance estimation are randomized procedures. However, at a price of
constant blowup in query complexity, we can make their failure probabilities
arbitrarily small constants, so that by a union bound the distance estimations
and tests run by $T$ simultaneously succeed with large constant probability.
\end{proof}

\section*{Acknowledgements}
We thank the anonymous reviewers for their helpful and insightful comments.

\bibliography{multcode-survey}
\bibliographystyle{alpha}

\appendix

\section{Interpolating set for affine-invariant codes}
\label{section:interpolating set}

In this section, we present, for any affine-invariant code
$\cC \subseteq \{\F_q^m \to \F_q\}$, an explicit interpolating set
$S_{\cC} \subseteq \F_q^m$, i.e.\ for any $\widehat{f}:S_{\cC} \to \F_q$
there exists a unique $f \in \cC$ such that $f|_{S_{\cC}} = \widehat{f}$.

Define $\alpha_1,\ldots,\alpha_m \in \F_{q^m}$, $\phi$, and $\phi^*$ as in
Section~\ref{subsection:isomorphism}.
It is straightforward
to verify that if $\cC \subseteq \{\F_q^m \to \F_q\}$ and $S \subseteq \F_{q^m}$
is an interpolating set for $\phi^*(\cC)$, then $\phi(S)$ is an interpolating
set for~$\cC$.

\begin{theorem}
\label{theorem:interpolating set}
Let $\cC \subseteq \{\F_q^m \to \F_q\}$ be a nontrivial affine-invariant code
with $\dim_{\F_q}(\cC) = D$.
Let $\omega \in \F_{q^m}$ be a generator, i.e. $\omega$ has order $q^m - 1$.
Let $S = \{\omega,\omega^2,\ldots,\omega^D\} \subseteq \F_{q^m}$. Then
$\phi(S) \subseteq \F_q^m$ is an interpolating set for $\cC$.
\end{theorem}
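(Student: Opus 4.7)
The plan is to translate the problem through $\phi$: by the observation made just before the theorem, it is enough to show that $S = \{\omega, \omega^2, \ldots, \omega^D\} \subseteq \F_{q^m}$ is an interpolating set for the code $V := \phi^*(\cC) \subseteq \{\F_{q^m} \to \F_q\}$. Since $\phi$ is an $\F_q$-linear isomorphism, $V$ is $\F_q$-affine invariant on $\F_{q^m}$; in particular it is closed under every dilation $f(z) \mapsto f(\beta z)$ for $\beta \in \F_{q^m}^*$ and every translation $f(z) \mapsto f(z + \gamma)$. By Proposition~\ref{proposition:affine-invariant code dimension}, $\dim_{\F_q} V = |\Deg(V)| = D = |S|$, so it suffices to prove that the evaluation map $V \to \F_q^S$ sending $f \mapsto (f(\omega^k))_{k=1}^D$ is injective.

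Suppose $f = \sum_{d \in \Deg(V)} f_d X^d \in V$ (with $f_d \in \F_{q^m}$) vanishes at every $\omega^k$, $k = 1, \ldots, D$. The vanishing conditions $\sum_{d \in \Deg(V)} f_d (\omega^d)^k = 0$ form a generalized Vandermonde system in the unknowns $\{f_d\}_{d \in \Deg(V)}$, whose determinant equals $\prod_{d \in \Deg(V)} \omega^d \cdot \prod_{d < d' \in \Deg(V)} (\omega^{d'} - \omega^d)$ up to sign. Since $\omega$ has order $q^m - 1$, every $\omega^d$ is nonzero, and the only way two distinct elements of $\Deg(V) \subseteq \{0, 1, \ldots, q^m - 1\}$ can give the same power of $\omega$ is if both $0$ and $q^m - 1$ lie in $\Deg(V)$.

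The main obstacle is therefore to rule out $\{0, q^m - 1\} \subseteq \Deg(V)$ for nontrivial $V$; in fact, I plan to prove the stronger statement that $q^m - 1 \notin \Deg(V)$ whenever $V$ is a proper subspace of $\{\F_{q^m} \to \F_q\}$. Pass to the $\F_{q^m}$-span $\widetilde V \subseteq \{\F_{q^m} \to \F_{q^m}\}$, which has $\dim_{\F_{q^m}} \widetilde V = \dim_{\F_q} V$ (a standard $\F_q$-linear independence argument) and inherits $V$'s affine invariance. If $q^m - 1 \in \Deg(V)$, project any $f \in V$ with $f_{q^m - 1} \ne 0$ onto the trivial-character isotypic of the $\F_{q^m}^*$-dilation action; this $\F_{q^m}$-linear projection produces the element $f_0 X^0 + f_{q^m - 1} X^{q^m - 1} \in \widetilde V$, and translating by any $\gamma \in \F_{q^m}^*$ gives via Lucas's expansion $(X + \gamma)^{q^m - 1} = \sum_{j=0}^{q^m - 1} (-1)^j \gamma^{q^m - 1 - j} X^j$ a polynomial whose coefficients at $X^1, \ldots, X^{q^m - 1}$ are all nonzero. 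Further translating $X^1 \in \widetilde V$ then extracts constants, so $\Deg(V) = \{0, 1, \ldots, q^m - 1\}$ and $V$ equals the full code, contradicting nontriviality. Consequently the Vandermonde determinant is nonzero, the evaluation map is injective, and $\phi(S)$ is an interpolating set for $\cC$.
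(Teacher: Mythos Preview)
Your proof follows the same route as the paper's: transfer the problem to $\cC' := \phi^*(\cC)$, write each codeword as $\sum_{d \in \Deg(\cC')} f_d X^d$, and observe that the evaluations at $\omega, \omega^2, \ldots, \omega^D$ form a generalized Vandermonde system in the unknowns $(f_d)$ with nodes $\omega^{i_1},\ldots,\omega^{i_D}$. The paper simply asserts this matrix is invertible; you go further and justify that the nodes are pairwise distinct by showing $q^m - 1 \notin \Deg(\cC')$ for any proper affine-invariant $\cC'$ (the only way distinctness could fail, since $\omega^0 = \omega^{q^m-1}$). This extra care is appropriate --- the paper leaves the point implicit.

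Two minor remarks on the writeup. First, by Lucas one has $\binom{q^m-1}{j} \equiv (-1)^{s_p(j)} \pmod p$ with $s_p(j)$ the base-$p$ digit sum, not $(-1)^j$; this does not affect the argument since you only use nonvanishing. Second, the passage from ``the translate of $f_0 + f_{q^m-1}X^{q^m-1}$ has all coefficients at $X^1,\ldots,X^{q^m-1}$ nonzero'' to ``$X^1 \in \widetilde V$'' tacitly uses a second dilation-isotypic projection (onto the one-dimensional $\beta\mapsto\beta$ eigenspace), which deserves to be stated explicitly; once $X^1,\ldots,X^{q^m-2} \in \widetilde V$ are extracted this way, your translation of $X^1$ indeed yields $X^0$, and then $X^{q^m-1}$ follows by subtraction.
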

\begin{proof}
The map $\phi$ induces a map
$\phi^*: \{\F_q^m \to \F_q\} \to \{\F_{q^m} \to \F_q\}$ defined by
$\phi^*(f) = f \circ \phi$. It suffices to show that $S$ is an interpolating
set for $\cC' \triangleq \phi^*(\cC)$. Observe that $\cC'$ is affine-invariant
over $\F_{q^m}$, and let
$\Deg(\cC') = \{i \mid \exists f \in \cC~~i \in \supp(f)\}$.
By Proposition~\ref{proposition:affine-invariant code dimension},
$\dim_{\F_q}(\cC') = |\Deg(\cC')|$, so suppose
$\Deg(\cC') = \{i_1,\ldots,i_D\}$. Every $g \in \cC'$ is of the form
$g(z) = \sum_{j=1}^D a_{j}z^{i_j}$, where $a_j \in \F_{q^m}$.
By linearity, it suffices to show that if $g \in \cC'$ is nonzero, then
$g(z) \ne 0$ for some $z \in S$. We have
\[
\begin{bmatrix}
\omega^{i_1} & \omega^{i_2} & \cdots & \omega^{i_D} \\
\omega^{2i_1} & \omega^{2i_2} & \cdots & \omega^{2i_D} \\
\vdots & \vdots & \ddots & \vdots \\
\omega^{Di_1} & \omega^{Di_2} & \cdots & \omega^{Di_D}
\end{bmatrix}
\begin{bmatrix}
a_1 \\
a_2 \\
\vdots \\
a_D
\end{bmatrix}
=
\begin{bmatrix}
g(\omega) \\
g(\omega^2) \\
\vdots \\
g(\omega^D)
\end{bmatrix}
\]
and the leftmost matrix is invertible since it's a generalized Vandermonde
matrix. Therefore, if $g \ne 0$, then the right-hand side, which is simply
the vector of evaluations of $g$ on $S$, is nonzero.
\end{proof}

\section{Local unique decoding upto half minimum distance}
\label{section:unique decoding}

\begin{theorem}
Let $\cC \subseteq \{\F_q \to \F_q\}$ be an affine-invariant code of distance
$\delta$.
For every positive integer $m \ge 2$ and for every $\epsilon > 0$,
there exists a local correction
algorithm for $\Lift_m(\cC)$ with query complexity $O(q/\epsilon^2)$ that
corrects up to $\left(\frac{1}{2} - \epsilon\right) \delta - \frac1q$
fraction errors.
\end{theorem}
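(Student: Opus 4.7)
The plan is to port the lines-weight local decoder of~\cite{BET10} (originally for matching-vector codes) to the lifted-code setting. On input $x \in \F_q^m$, the algorithm samples $N = \Theta(1/\epsilon^2)$ independent uniformly random lines $\ell_1,\ldots,\ell_N$ through $x$, queries the $q$ values of $r|_{\ell_i}$ on each line (for a total of $O(q/\epsilon^2)$ queries), runs a unique decoder for $\cC$ up to radius $\delta/2$ on each $r|_{\ell_i}$ to produce a codeword $c_i \in \cC$ whenever it succeeds, assigns the weight $w_i = (\delta/2 - \delta(r|_{\ell_i}, c_i))_+$ (and $w_i = 0$ when decoding fails), and outputs the value $v \in \F_q$ maximizing the weighted vote $\sum_{i \, : \, c_i(t_i) = v} w_i$, where $t_i$ is the parameter of $x$ on $\ell_i$.

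To analyze, let $f \in \Lift_m(\cC)$ satisfy $\delta(r,f) \le (\tfrac12 - \epsilon)\delta - \tfrac1q$ and write $\delta_i = \delta(r|_{\ell_i}, f|_{\ell_i})$. Two observations drive the argument. First, $f|_{\ell_i} \in \cC$ for every line (by definition of $\Lift_m$), so whenever $\delta_i < \delta/2$ the uniqueness of nearest-codeword decoding in $\cC$ forces $c_i = f|_{\ell_i}$, meaning $\ell_i$ contributes weight $\delta/2 - \delta_i \ge 0$ to the vote for $f(x)$. Second, if the decoder returns $c_i \ne f|_{\ell_i}$, the minimum distance $\delta$ of $\cC$ gives $\delta(r|_{\ell_i}, c_i) \ge \delta - \delta_i$, hence $w_i \le \delta_i - \delta/2$. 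Adding these contributions, the total weight voting for $f(x)$ minus the total weight voting for \emph{any} specific alternative value is at least $\sum_{i=1}^N (\delta/2 - \delta_i)$, so it suffices to show this sum is positive with probability at least $2/3$.

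Using that the $q-1$ non-$x$ points on a uniform line through $x$ are each individually uniform over $\F_q^m \setminus \{x\}$, a direct calculation gives $\E[\delta_i] \le \delta(r,f) + 1/q \le (\tfrac12 - \epsilon)\delta$, where the $1/q$ absorbs the deterministic contribution of $x$ itself and is exactly what the $-1/q$ slack in the hypothesis is for. Hence $\E\!\left[\sum_i (\delta/2 - \delta_i)\right] \ge N\epsilon\delta$, and since the $\ell_i$ are chosen independently and each $\delta_i \in [0,1]$, Chebyshev's inequality with the crude per-line variance bound $\mathrm{Var}(\delta_i) \le \E[\delta_i^2] \le \E[\delta_i]$ gives positivity of the sum with probability $\ge 2/3$ once $N$ is a sufficiently large $\Theta(1/\epsilon^2)$.

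The main obstacle is the variance step. Because $x$ lies on every line, the non-$x$ points of a single $\ell_i$ are not pairwise independent as samples from $\F_q^m \setminus \{x\}$ (they are constrained to be collinear with $x$), which rules out a tight binomial-style concentration \emph{within} one line. Fortunately, the concentration one actually needs is \emph{across} the $N$ independent lines, and the very crude per-line bound $\mathrm{Var}(\delta_i) \le 1$ combined with the independence of the $\ell_i$ carries the Chebyshev step through; any residual factors of $1/\delta$ are absorbed into the $O(\cdot)$ constant, treating $\delta$ as a fixed parameter of the code, as is standard in this setting.
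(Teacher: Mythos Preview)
Your proposal is correct and follows essentially the same approach as the paper: the same lines-weight algorithm ported from \cite{BET10}, with the same two per-line observations (a line with $\delta_i<\delta/2$ must decode to $f|_{\ell_i}$, and a wrong decode has weight at most $\delta_i-\delta/2$), and then concentration across the independent lines. The one cosmetic difference is that the paper normalizes the weights to $W_i=\max\!\bigl(1-\tfrac{\delta_i}{\delta/2},0\bigr)\in[0,1]$ and applies Hoeffding, which makes the margin $2\epsilon$ rather than $\epsilon\delta$ and hence yields the stated $O(q/\epsilon^2)$ with an absolute constant; your Chebyshev route leaves exactly the residual $1/\delta$ factor you already flagged.
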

\begin{proof}
Let $\Corr_{\cC}$ be a correction algorithm for $\cC$, so that for every
$f:\F_q \to \F_q$ that is $\delta/2$-close to some $g \in \cC$,
$\Corr_{\cC}(f) = g$. The following algorithm is a local correction
algorithm that achieves the desired parameters.

\paragraph{Local correction algorithm:} Oracle access to received word
$r:\F_q^m \to \F_q$.

On input $x \in \F_q^m$:
\begin{enumerate}
\item%
Let $c = \lceil\frac{4\ln 6}{\epsilon^2}\rceil$ and pick $a_1,\ldots,a_c \in \F_q^m$
independently and uniformly at random.

\item%
For each $i \in [c]$:

\begin{enumerate}
\item%
Set $r_i(t) := r(x + a_it)$.

\item%
Compute $s_i := \Corr_{\cC}(r_i)$ and $\delta_i := \delta(r_i, s_i)$.

\item%
Assign the value $s_i(0)$ a weight
$W_i := \max\left(1 - \frac{\delta_i}{\delta/2}, 0\right)$.

\end{enumerate}

\item%
Set $W := \sum_{i=1}^c W_i$. For every $\alpha \in \F_q$, let
$w(\alpha) := \frac{1}{W}\sum_{i:s_i(0) = \alpha} W_i$.
If there is an $\alpha \in \F_q$ with $w(\alpha) > \frac{1}{2}$, output
$\alpha$, otherwise fail.
\end{enumerate}

\paragraph{Analysis:}
Fix a received word $r:\F_q^m \to \F_q$ that is $(\tau - \frac{1}{q})$-close
from a codeword
$c \in \Lift_m(\cC)$, where $\tau = \left(\frac{1}{2} - \epsilon\right)\delta$.
The query complexity follows from the fact that the algorithm queries
$O(1/\epsilon^2)$ lines, each consisting of $q$ points.
Fix an input $x \in \F_q^m$. We wish to show that, with probability at least
$2/3$, the algorithm outputs $c(x)$, i.e.\ $w(c(x)) > \frac{1}{2}$.

Consider all lines $\ell$ passing through $x$. 
For each such line $\ell$, define the following:
\begin{eqnarray*}
\tau_{\ell} &:=& \delta(r|_{\ell}, c|_{\ell}) \\
s_{\ell} &:=& \Corr_{\cC}(r|_\ell) \\
\delta_{\ell} &:=& \delta(r|_{\ell}, s_{\ell}) \\
W_{\ell} &:=& \max\left(1 - \frac{\delta_{\ell}}{\delta/2}, 0 \right) \\
X_{\ell} &=&
\begin{cases}
W_{\ell} & s_{\ell} = c|_{\ell} \\
0 & s_{\ell} \ne c|_{\ell}.
\end{cases}
\end{eqnarray*}
Let $p := \Pr_{\ell}[s_{\ell} = c|_{\ell}]$. Note that if $s_{\ell} = c|_{\ell}$,
then $\delta_{\ell} = \tau_{\ell}$, otherwise
$\delta_{\ell} \ge \delta - \tau_{\ell}$.
Hence, if $s_{\ell} = c|_{\ell}$, then
$W_{\ell} \ge 1 - \frac{\tau_{\ell}}{\delta/2}$, otherwise
$W_{\ell} \le \frac{\tau_{\ell}}{\delta/2} - 1$.

Define
\begin{eqnarray*}
\tau_{\text{good}} &\:=& \E[\tau_{\ell} \mid s_\ell = c|_{\ell}] \\
\tau_{\text{bad}} &:=& \E[\tau_{\ell} \mid s_\ell \ne c|_{\ell}] \\
W_{\text{good}} &:=& \E[W_{\ell} \mid s_\ell = c|_{\ell}] 
\ge 1 - \frac{\tau_{\text{good}}}{\delta/2}\\
W_{\text{bad}} &:=& \E[W_{\ell} \mid s_\ell \ne c|_{\ell}]
\le \frac{\tau_{\text{bad}}}{\delta/2} - 1.
\end{eqnarray*}
Observe that
\begin{eqnarray*}
\E[\tau_{\ell}] &\le& \frac{1 + (\tau - \frac1q)(q-1)}{q} \le
\tau\\
\E[X_{\ell}] &=& p \cdot W_{\text{good}} \\
\E[W_{\ell}] &=& p \cdot W_{\text{good}} + (1-p) \cdot W_{\text{bad}}.
\end{eqnarray*}
We claim that
\begin{equation}
\label{equation:good weight is large}
p \cdot W_{\text{good}} \ge (1-p) \cdot W_{\text{bad}} + 2\epsilon.
\end{equation}
To see this, we start from
\[
\left(\frac{1}{2} - \epsilon \right) \delta = \tau
\ge \E[\tau_{\ell}]
= p \cdot \tau_{\text{good}} + (1-p) \cdot \tau_{\text{bad}}.
\]
Dividing by $\delta/2$ yields
\[
1 - 2\epsilon \ge p \cdot \frac{\tau_{\text{good}}}{\delta/2}
+ (1-p) \cdot \frac{\tau_{\text{bad}}}{\delta/2}.
\]
Re-writing $1 - 2\epsilon$ on the left-hand side as $p + (1-p) - 2\epsilon$ 
and re-arranging, we get
\[
p \cdot \left(1 - \frac{\tau_{\text{good}}}{\delta/2}\right)
\ge (1-p) \cdot \left(\frac{\tau_{\text{bad}}}{\delta/2} - 1\right) + 2\epsilon.
\]
The left-hand side is bounded from above by $p \cdot W_{\text{good}}$ while the
right-hand side is bounded from below by $(1-p) \cdot W_{\text{bad}} + 2\epsilon$, hence~\eqref{equation:good weight is large} follows.

For each $i \in [c]$, let $\ell_i$ be the line $\{x + a_it \mid t \in \F_q\}$.
Note that the $X_{\ell}$ are defined such that line $i$ contributes weight
$\frac{X_{\ell_i}}{W}$ to $w(c(x))$, so it suffices to show that, with
probability at least $2/3$,
\[
\frac{\sum_{i=1}^c X_{\ell_i}}{\sum_{i=1}^c W_{\ell_i}} > \frac{1}{2}.
\]
Each $X_{\ell}, W_{\ell} \in [0,1]$, so by Hoeffding's inequality,
\begin{eqnarray*}
\Pr\left[\left|\frac{1}{c} \sum_{i=1}^c X_{\ell_i} - \E[X_{\ell}]\right|
> \epsilon/2 \right] &\le& \exp(-\epsilon^2 c/4) \le 1/6 \\
\Pr\left[\left|\frac{1}{c} \sum_{i=1}^c W_{\ell_i} - \E[W_{\ell}]\right|
> \epsilon/2 \right] &\le& \exp(-\epsilon^2 c/4) \le 1/6.
\end{eqnarray*}
Therefore, by a union bound, with probability at least $2/3$ we have,
after applying~\eqref{equation:good weight is large},
\begin{eqnarray*}
\frac{\sum_{i=1}^c X_i}{\sum_{i=1}^c W_i}
&\ge& \frac{\E[X_{\ell}] - \epsilon/2}{\E[W_{\ell}] + \epsilon/2} \\
&=& \frac{p \cdot W_{\text{good}} - \epsilon/2}
{p \cdot W_{\text{good}} + (1-p) \cdot W_{\text{bad}} + \epsilon/2} \\
&\ge& \frac{(1-p) \cdot W_{\text{bad}} + 3\epsilon/2}
{2(1-p) \cdot W_{\text{bad}} + 5\epsilon/2} \\
&>& \frac{1}{2}
\end{eqnarray*}
where the second to last inequality follows from \eqref{equation:good weight is large} and the fact that if $a < b$ and $x \le y$, then
$\frac{x + a}{x + b} \le \frac{y + a}{y + b}$ (here $a=-\epsilon/2$,
$b = (1-p) \cdot W_{\text{bad}} + \epsilon/2$,
$x = (1-p) \cdot W_{\text{bad}} + 2\epsilon$, and
$y = p \cdot W_{\text{good}}$).
\end{proof}

\end{document}